\pdfoutput=1
\documentclass[letterpaper, 10 pt, conference]{ieeeconf}  

\usepackage{amsmath, amsfonts, amssymb, mathtools, commath, hyperref, cancel, bm, xcolor, microtype, etoolbox, comment,mdframed}

\IEEEoverridecommandlockouts                              

\usepackage{amsmath,amsfonts,bm}

\def\eqref#1{equation~\ref{#1}}

\def\1{\bm{1}}

\DeclareMathAlphabet{\mathsfit}{\encodingdefault}{\sfdefault}{m}{sl}
\SetMathAlphabet{\mathsfit}{bold}{\encodingdefault}{\sfdefault}{bx}{n}

\usepackage{url}
\usepackage{xspace}
\usepackage{booktabs}
\usepackage{siunitx}
\usepackage{float}

\title{\LARGE \bf
Incremental stability in $p = 1$ and $p = \infty$: classification and synthesis
}

\author{Simon Kuang and Xinfan Lin$^{1}$ 
\thanks{$^{1}$ Department of Mechanical and Aerospace Engineering, University of California, Davis; Davis, CA 95616, USA. {\ttfamily \{slku, lxflin\}@ucdavis.edu}}}%

\usepackage{algorithm,algpseudocode, algorithmicx}
\usepackage{doi, xcolor}
\newtheorem{definition}{Definition}
\newtheorem{proposition}{Proposition}

\newtheorem{theorem}{Theorem}

\newtheorem{corollary}{Corollary}
\newtheorem{remark}{Remark}
\newtheorem{lemma}{Lemma}

\begin{document}
\maketitle

\begin{abstract}
All Lipschitz dynamics with the weak infinitesimal contraction (WIC) property
can be expressed as a Lipschitz nonlinear system in proportional negative feedback---this statement, a ``structure theorem,'' is true in the \(p=1\) and \(p=\infty\) norms.
Equivalently, a Lipschitz vector field is WIC if and only if it can be written as a scalar decay plus a Lipschitz-bounded residual.
We put this theorem to use using neural networks to approximate Lipschitz functions.
This results in a map from unconstrained parameters to the set of WIC vector fields, enabling standard gradient-based training with no projections or penalty terms.
Because the induced \(1\)- and \(\infty\)-norms of a matrix reduce to row or column sums, Lipschitz certification costs only \(O(d^2)\) operations---the same order as a forward pass and appreciably cheaper than eigenvalue or semidefinite methods for the \(2\)-norm.
Numerical experiments on a planar flow-fitting task and a four-node opinion network demonstrate that the parameterization (re-)constructs contracting dynamics from trajectory data.
In a discussion of the expressiveness of non-Euclidean contraction, we prove that the set of \(2\times 2\) systems that contract in a weighted \(1\)- or \(\infty\)-norm is characterized by an eigenvalue cone, a strict subset of the Hurwitz region that quantifies the cost of moving away from the Euclidean norm.
\end{abstract}

\section{Introduction}

Neural ordinary differential equations (neural ODEs)
\(\dot x = f(x; \theta)\)
are a way to parameterize continuous-time dynamic systems.
They can be interpreted as residual networks with continuous depth \cite{haber_stable_2017,chen_neural_2018}, offering memory-efficient training via adjoint methods and a principled interface with dynamical-systems theory.
However, this expressiveness brings a well-known challenge: unconstrained vector fields can exhibit diverging trajectories, sensitivity to perturbations, and loss of well-posedness under long integration horizons \cite{kidger_neural_2022}.
Therefore, we desire to enforce nonlinear stability guarantees by construction, rather than by iterative projection or post-hoc verification.

Many a student of nonlinear dynamic systems has grieved that the dynamic system \(\dot x = f(x)\) is not necessarily stable (in any sense) if all of the eigenvalues of the Jacobian \(D f(x)\) have negative real parts for all \(x\). 
However, a stronger eigenvalue condition can restore justice to this situation.
If the Gershgorin discs of \(Df(x)\) or \(Df(x)^\intercal\) are all contained in the left half-plane for all \(x\), then the system is stable---in a sense.\footnote{
    The Gershgorin circle theorem states that every eigenvalue of a matrix \((a_{ij})_{1 \leq i, j \leq n}\) lies within at least one of the discs centered at \(a_{ii}\) with radius \(R_i = \sum_{j \neq i} |a_{ij}|\), where \(R_i\) is the sum of the absolute values of the off-diagonal entries in row \(i\).
}
It is this sense of stability that the present work seeks to apply to dynamic system synthesis.
The sense is weak infinitesimal contraction in the non-Euclidean \(1\)- and \(\infty\)-norms.

\textbf{Contraction.}
While the Lyapunov notion of stability is defined in terms of invariant sets around a given equilibrium,
contraction theory is interested in the relative behavior of infinitesimally close states.
A vector field is weakly infinitesimally contracting (WIC) if nearby states remain close to first order in some norm\cite{lohmiller_contraction_1998,davydov_perspectives_2024,bullo_contraction_2026}.
The guarantees afforded to WIC systems are anything but weak: a dichotomy theorem ensures that every bounded trajectory converges, local asymptotic stability implies global asymptotic stability, and perturbations degrade the contraction rate gracefully
(\cite{jafarpour_weak_2022}; \cite[Thms.4.3--4.4]{bullo_contraction_2026}).
Any neural ODE whose vector field is WIC by construction therefore inherits these properties for free.

In what norm?
Most work on nonlinear contraction, inspired by linear time-invariant stability (which views stable linear systems as gradient flows of a positive-definite quadratic potential), has developed sufficient conditions for contraction in the $2$-norm.
Contraction beyond the $2$-norm has been studied via weak pairings and logarithmic norm conditions \cite{davydov_non-euclidean_2022},
but there are few applications that either recognize non-Euclidean contraction in a real system or target it as a synthesis principle.

\textbf{From analysis to synthesis.}
Contraction analysis of neural networks has a substantial history.
Firing-rate models $\dot x = -x + \Phi(Ax+u)$ and Hopfield models $\dot x = -x + A \Phi(x) + u$,
examples of Lur'e systems (feedback interconnection of a stable linear system and a nonlinearity),
are known to be contractive in the \(1\)- and/or \(\infty\)-norms subject to certain sufficient conditions on the nonlinearity \(\Phi\) and the matrix $A$ \cite[Lem.~3.21]{bullo_contraction_2026}.
These results are all instances of the \emph{analysis} problem: given a system, is it contracting?

The complementary \emph{synthesis} problem---parameterize all contracting vector fields so that a learning algorithm can search over them---is largely open.
We resolve it by proving a necessary and sufficient condition: every WIC vector field in \(p\in \{1, \infty\}\) is a Lur'e system.

Our structure theorem parameterizes contractive vector fields using Lipschitz functions.
It is helpful, therefore, to have an approximately universal parameterization of Lipschitz functions.

\textbf{Lipschitz neural networks.}
The Lipschitz constant of a neural network quantifies robustness and generalization \cite{virmaux_lipschitz_2018,miyato_spectral_2018}, and has become an object of study in control theory due to its relationship to (discrete-time) incremental quadratic inequalities.
The \textbf{trivial bound}, equal to the product of the Lipschitz constants of the layers, is often conservative.
Tighter bounds can be obtained via semidefinite relaxations \cite{fazlyab_efficient_2019,xu_eclipse_2024,xu_eclipse-gen-local_2025}, but these methods are computationally expensive and do not scale to large networks.

Counterintuitively, the composite problem of A) training a neural network to minimize a loss function and B) certifying that the network is Lipschitz continuous is easier than certifying an arbitrary neural network.
A recent work \cite{lipschitz_preprint} proposes to penalize the trivial bound directly in the loss function during training.
This modification, along with a careful choice of activation functions, leads to networks on which the trivial bound is tight.

\textbf{Contributions.}
In this paper we provide an \emph{unconstrained} parameterization of contracting neural ODEs for the $1$- and $\infty$-norms.
Our contributions are as follows.
\begin{enumerate}
    \item \textbf{Structure theorem (Theorem~\ref{thm:wic_decomposition}).}
    We prove an if-and-only-if condition for a WIC vector field in the $p$-norm ($p \in \{1,\infty\}$).
    The problem of parameterizing contracting vector fields reduces losslessly to parameterizing Lipschitz-bounded maps.
    This decomposition extends to weighted norms $\|Wx\|_p$ (Corollary~\ref{cor:weighted_wic_decomposition}).
    \item \textbf{Unconstrained training.}
    Combining the structure theorem with a Lipschitz-bounded feedforward network, we obtain a map from an unconstrained parameter space to the set of WIC vector fields, enabling standard gradient-based optimization with no projections or penalty terms.

    An upshot of using \(p \in \{1, \infty\}\) is that the Lipschitz estimation is of subdominant computational complexity: \(O(d^2)\) where \(d\) is the dimension of the largest matrix.
    This is an appreciable advantage compared to \(p = 2\), whereas the trivial bound takes \(O(d^3)\) and semidefinite programming roughly \(O(d^{3.5})\) (depending on the problem and the model of computation).

    \item \textbf{Eigenvalue cone condition (Theorem~\ref{thm:eigenvalue_cone}).}
    In the discussion we prove that a diagonalizable $2 \times 2$ matrix is WIC in a weighted $p$-norm ($p \in \{1,\infty\}$) if and only if its eigenvalues lie in the cone $\{\alpha + \beta i : \alpha < 0,\; |\beta| \leq -\alpha\}$, a strict subset of the Hurwitz region.
    This characterization quantifies the intrinsic conservatism of non-Euclidean contraction relative to $2$-norm contraction.
\end{enumerate}

\paragraph*{Notation}
For $x \in \mathbb{R}^n$ and $p \in [1, \infty]$, the $\ell_p$ norm is defined as
\begin{align}
    \left\|x\right\|_p = \begin{cases}
        \left(\sum_{i=1}^n |x_i|^p\right)^{1/p} & \text{if } p < \infty, \\
        \max_{i \in \{1,\ldots,n\}} |x_i| & \text{if } p = \infty.
    \end{cases}
\end{align}
For $A \in \mathbb{R}^{m \times n}$, the induced matrix norm is defined as
\begin{align}
    \left\|A\right\|_p = \sup_{x \neq 0} \frac{\left\|Ax\right\|_p}{\left\|x\right\|_p}.
\end{align}
The Jacobian of a function $f: \mathbb{R}^n \to \mathbb{R}^m$ is denoted by $D f(x) \in \mathbb{R}^{m \times n}$.
The identity function is denoted by $\mathrm{id}: \mathbb{R}^n \to \mathbb{R}^n$.
The class of Lipschitz functions \(\mathbb R^n \to \mathbb R^n\) is denoted by \(\mathrm{Lip}(\mathbb R^n)\).

\section{Background}
The following definitions and basic results on contraction theory are taken from \cite[Ch.~3--4]{bullo_contraction_2026}, modulo some simplifications such as using \(\mathbb R^n\) as the domain instead of an arbitrary convex subset.

\begin{definition}[Matrix Measure]
\label{def:matrix_measure}
Given a matrix norm $\|\cdot\|_p$, the matrix measure (or logarithmic norm) $\mu: \mathbb{R}^{n \times n} \to \mathbb{R}$ is defined as
\begin{align}
    \mu_p(A) = \lim_{h \to 0^+} \frac{\|I + hA\|_p - 1}{h}.
\end{align}
\end{definition}

\begin{definition}[{\cite[Def.~4.2]{bullo_contraction_2026}}]
\label{def:wic}
A dynamical system $\dot{x} = f(x)$ (respectively, a vector field \(f\)) is weakly infinitesimally contracting (WIC) with respect to $p$-norm if
\begin{align}
    \sup_{x \in \mathbb{R}^n} \mu_p(D f(x)) \leq 0.
\end{align}
\end{definition}

\begin{theorem}[{\cite[Thm.~3.7]{bullo_contraction_2026}}]
\label{thm:trajectory_contraction}
    Suppose that a dynamic system \(\dot x = f(x)\) is WIC with respect to the \(p\)-norm.
    Then for any trajectories \(x_1(t), x_2(t)\) of the ODE \(\dot x = f(x)\),
    \begin{align*}
        \left\|x_1(t) - x_2(t)\right\|
        &\leq
        \left\|x_1(0) - x_2(0)\right\|
    \end{align*}
    for all \(t \geq 0\).
\end{theorem}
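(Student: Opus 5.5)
The plan is to show that $t \mapsto \|x_1(t) - x_2(t)\|_p$ is non-increasing by bounding its upper right Dini derivative by zero. Set $e(t) = x_1(t) - x_2(t)$. The first step is to write the difference of vector fields along a segment, using the mean value theorem in integral form:
\begin{align*}
    f(x_1) - f(x_2) = \left(\int_0^1 Df\bigl(x_2 + s(x_1 - x_2)\bigr)\, ds\right) e =: A(t)\, e(t).
\end{align*}
This needs $f$ to be $C^1$, or at least locally Lipschitz and differentiable almost everywhere. In the Lipschitz case I would first mollify $f$, as discussed below. We then have $\dot e = A(t) e$.

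The second step shows that $\mu_p$ of the averaged Jacobian is nonpositive. The matrix measure is subadditive and positively homogeneous, hence convex. Jensen's inequality for the integral, which follows from convexity together with the continuity of $\mu_p$ via Riemann sums, gives
\begin{align*}
    \mu_p(A(t)) \le \int_0^1 \mu_p\bigl(Df(\cdot)\bigr)\, ds \le 0
\end{align*}
by Definition~\ref{def:wic}.

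The third step is the standard Dini-derivative estimate. By Taylor expansion, $e(t+h) = (I + hA(t))e(t) + o(h)$. Hence
\begin{align*}
    D^+\|e(t)\|_p = \limsup_{h\to 0^+} \frac{\|e(t+h)\|_p - \|e(t)\|_p}{h} \le \lim_{h \to 0^+}\frac{\|I+hA(t)\|_p - 1}{h}\|e(t)\|_p = \mu_p(A(t))\|e(t)\|_p \le 0.
\end{align*}
The limit in Definition~\ref{def:matrix_measure} exists because $h \mapsto \|I+hA\|_p$ is convex. A continuous function with nonpositive upper Dini derivative everywhere is non-increasing; this is a standard comparison lemma, which I would prove by contradiction using $\|e(t)\|_p - \varepsilon t$. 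This yields the claim. Alternatively, one can use Coppel's inequality $\|e(t)\|\le \exp\bigl(\int_0^t \mu_p(A)\bigr)\|e(0)\|$.

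The main obstacle is regularity. The paper's functions are merely Lipschitz, so $Df$ exists only almost everywhere, and the segment between $x_1$ and $x_2$ may lie in the nondifferentiability set. I would handle this by mollification: $f_\varepsilon = f * \rho_\varepsilon$ is smooth, and $Df_\varepsilon = (Df) * \rho_\varepsilon$ is an average of Jacobians. By the same convexity/Jensen argument, $\mu_p(Df_\varepsilon) \le 0$ everywhere. Applying the smooth argument to $f_\varepsilon$ gives non-expansion of its flow. Since $f_\varepsilon \to f$ uniformly with a uniform Lipschitz constant, Grönwall's inequality shows the flows converge on compact time intervals, and the inequality passes to the limit.
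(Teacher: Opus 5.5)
Your proof is correct. The paper gives no argument of its own and simply imports the result from \cite[Thm.~3.7]{bullo_contraction_2026}; your route (integral mean-value formula, convexity and subadditivity of \(\mu_p\) to get a one-sided Lipschitz bound, then an upper Dini-derivative estimate and a comparison lemma) is the standard proof of that result.

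Your mollification step is a valid way to cover merely Lipschitz \(f\); the other common route is a Rademacher/Fubini argument that almost every segment meets the nondifferentiability set in a null set. One small correction there: in \(Df_\varepsilon = (Df)*\rho_\varepsilon\) the integrand is only bounded and measurable, so the Riemann-sum version of Jensen does not apply. Use Jensen for a general probability measure instead, which holds because \(\mu_p\) is a continuous sublinear function and hence a supremum of linear functionals.
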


The criteria for \(f\) to be WIC are abstract.
In the next section, we develop the theory for how to enforce this condition when \(f\) is a neural network.

\section{Enforcing Contraction}
The abstract definition of matrix measure leads to explicit formulas in certain \(p\).

\begin{lemma}[Matrix Measure for Common Norms]
\label{lem:matrix_measure_formulas}
The matrix measure for $p \in \{1, 2, \infty\}$ is given by:
\begin{align}
    \mu_1(A) &= \max_{j} \left( a_{jj} + \sum_{i \neq j} |a_{ij}| \right), \\
    \mu_2(A) &= \lambda_{\max}\left( \frac{A + A^\top}{2} \right), \\
    \mu_\infty(A) &= \max_{i} \left( a_{ii} + \sum_{j \neq i} |a_{ij}| \right).
\end{align}
\end{lemma}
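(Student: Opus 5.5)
The plan is to compute each matrix measure directly from Definition~\ref{def:matrix_measure}, using an explicit formula for the induced norm $\|I+hA\|_p$ and then passing to the limit $h\to 0^+$. For $p=\infty$ and $p=1$ I will use the standard facts that the induced $\infty$-norm is the maximum absolute row sum and the induced $1$-norm is the maximum absolute column sum. For $p=2$ I will use the fact that $\|M\|_2^2=\lambda_{\max}(M^\top M)$.

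\textbf{Case $p=\infty$.} Write
\[
\|I+hA\|_\infty=\max_i\Bigl(|1+ha_{ii}|+h\sum_{j\neq i}|a_{ij}|\Bigr).
\]
For $0<h<1/\max_i|a_{ii}|$, the diagonal term $1+ha_{ii}$ is positive, so the absolute value can be dropped. This gives $\|I+hA\|_\infty=1+h\max_i\bigl(a_{ii}+\sum_{j\neq i}|a_{ij}|\bigr)$ exactly. The difference quotient is then constant in $h$ on this interval, so the limit equals the stated formula.

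\textbf{Case $p=1$.} The same argument applies with columns in place of rows. Equivalently, since $\|M\|_1=\|M^\top\|_\infty$, we have $\mu_1(A)=\mu_\infty(A^\top)$, which yields the column-sum formula with indices swapped.

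\textbf{Case $p=2$.} Expand
\[
(I+hA)^\top(I+hA)=I+h(A+A^\top)+h^2A^\top A.
\]
Then $\|I+hA\|_2=\sqrt{\lambda_{\max}\bigl(I+h(A+A^\top)+h^2A^\top A\bigr)}$. By Weyl's inequality, the $h^2A^\top A$ term shifts the eigenvalues by at most $h^2\|A\|_2^2$. Hence
\[
\lambda_{\max}=1+h\,\lambda_{\max}(A+A^\top)+O(h^2).
\]
Using $\sqrt{1+\epsilon}=1+\epsilon/2+O(\epsilon^2)$, I obtain $\|I+hA\|_2=1+h\,\lambda_{\max}\bigl((A+A^\top)/2\bigr)+O(h^2)$. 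Dividing by $h$ and letting $h\to0^+$ gives the formula.

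\textbf{Main obstacle.} The only step needing care is the $p=2$ expansion. The $O(h^2)$ control comes from Weyl's inequality, and $\lambda_{\max}(I+hS)=1+h\lambda_{\max}(S)$ holds exactly for symmetric $S$ and $h>0$. The $1$- and $\infty$-cases reduce to the sign observation that $1+ha_{ii}>0$ for small $h$.
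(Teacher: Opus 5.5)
Your proof is correct. There is nothing in the paper to compare it against: the paper states this lemma without proof and treats it as standard; the surrounding background is drawn from \cite{bullo_contraction_2026}.

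Your direct computation from Definition~\ref{def:matrix_measure} is the standard one and is complete. For $p=\infty$, the difference quotient is exactly constant once $1+ha_{ii}>0$ for all $i$; read the threshold $1/\max_i|a_{ii}|$ as $+\infty$ when the diagonal vanishes. The case $p=1$ follows from $\|M\|_1=\|M^\top\|_\infty$. For $p=2$, the Weyl bound together with $\lambda_{\max}(I+hS)=1+h\lambda_{\max}(S)$ for $h>0$ gives the first-order expansion.

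What your route adds over a bare citation is that it exposes the mechanism the paper relies on later. Multiplying your identity $\|I+hA\|_\infty=1+h\,\mu_\infty(A)$ by $\gamma=1/h$ gives $\|A+\gamma I\|_\infty=\gamma+\mu_\infty(A)$ for all $\gamma\ge\max_i(-a_{ii})$. This is exactly the computation in the only-if direction of Theorem~\ref{thm:wic_decomposition}.

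Your $p=2$ case delivers this relation only to first order in $h$, and that limitation is genuine rather than an artifact of your method. For $A=\begin{psmallmatrix}0&1\\0&0\end{psmallmatrix}$, one has $\|A+\gamma I\|_2-\gamma=\tfrac12\bigl(\sqrt{4\gamma^2+1}+1\bigr)-\gamma>\tfrac12=\mu_2(A)$ for every real $\gamma$. Consequently the $2$-norm WIC field $f(x)=(A-\tfrac12 I)x$ admits no decomposition of the form in Theorem~\ref{thm:wic_decomposition} with $p=2$.
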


Note the resemblance of the \(\{1, \infty\}\)-matrix measures to the \(\{1, \infty\}\)-matrix norms (respectively).
\begin{lemma}[Matrix Norms for Common Norms]
\label{lem:matrix_norm_formulas}
The induced matrix norm for $p \in \{1, 2, \infty\}$ is given by:
\begin{align}
    \|A\|_1 &= \max_{j} \left( |a_{jj}| + \sum_{i \neq j} |a_{ij}| \right), \\
    \|A\|_2 &= \sigma_{\max}(A), \\
    \|A\|_\infty &= \max_{i} \left( |a_{ii}| + \sum_{j \neq i} |a_{ij}| \right),
\end{align}
where $\sigma_{\max}(A)$ denotes the largest singular value of $A$.
\end{lemma}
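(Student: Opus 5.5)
The plan is to prove the three formulas separately, working straight from the definition of the induced norm. For $p=\infty$ I will show $\|A\|_\infty = \max_i \sum_j |a_{ij}|$ by matching upper and lower bounds. Note that $|a_{ii}| + \sum_{j\neq i}|a_{ij}|$ is just the full absolute row sum. For the upper bound, fix $x$ with $\|x\|_\infty \le 1$. Then
\[
|(Ax)_i| \le \sum_j |a_{ij}||x_j| \le \sum_j |a_{ij}|,
\]
and taking the maximum over $i$ gives $\|A\|_\infty \le \max_i \sum_j|a_{ij}|$. For the lower bound, choose a row $i^*$ that attains the maximum and set $x_j = \operatorname{sign}(a_{i^*j})$, with the sign of $0$ taken as $1$. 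Then $\|x\|_\infty = 1$ and $(Ax)_{i^*} = \sum_j |a_{i^*j}|$, so the bound is attained.

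For $p=1$ the argument is dual. The upper bound is
\[
\|Ax\|_1 \le \sum_i\sum_j |a_{ij}||x_j| = \sum_j |x_j| \sum_i |a_{ij}| \le \Big(\max_j \sum_i |a_{ij}|\Big)\|x\|_1 .
\]
Equality holds at $x = e_{j^*}$, where $j^*$ is the column attaining the maximum. One could instead use the duality $\|A\|_1 = \|A^\top\|_\infty$, but the direct computation is just as short.

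For $p=2$ I write $\|Ax\|_2^2 = x^\top A^\top A x$. Since $A^\top A$ is symmetric positive semidefinite, the spectral theorem gives an orthonormal eigenbasis. The Rayleigh quotient argument then gives
\[
\sup_{\|x\|_2=1} x^\top A^\top A x = \lambda_{\max}(A^\top A),
\]
attained at a top eigenvector. Taking square roots yields $\sigma_{\max}(A)$, by the definition of singular values as square roots of the eigenvalues of $A^\top A$.

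No step here is a real obstacle; these are standard facts. The only point that needs care is the $\infty$ lower bound: the sign vector must be built from the maximizing row, and zero entries need a convention for their sign. In the $p=2$ case one must remember that the supremum is attained by compactness of the unit sphere, or explicitly at the eigenvector. I will also remark that the row and column sums are written with the diagonal term split off. This makes the resemblance to Lemma~\ref{lem:matrix_measure_formulas} explicit, where $|a_{jj}|$ is replaced by $a_{jj}$.
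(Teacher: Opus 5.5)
Your proof is correct. The paper does not prove this lemma; it states it as a standard background fact taken from the contraction-theory text it cites, so there is no competing argument to compare against. Your argument is the standard textbook proof that the citation stands in for: triangle-inequality upper bounds, attained by the sign vector of a maximizing row for $p=\infty$ and by the basis vector $e_{j^*}$ of a maximizing column for $p=1$, plus the Rayleigh quotient of $A^\top A$ for $p=2$.

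Working with full absolute row and column sums also covers rectangular $A\in\mathbb{R}^{m\times n}$. The paper's notation section allows such $A$, but the diagonal-split form $|a_{jj}|+\sum_{i\neq j}|a_{ij}|$ only makes literal sense when $A$ is square.
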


Accordingly, matrix measures share many properties with matrix norms, such as subadditivity and homogeneity.
\begin{proposition}[Properties of Matrix Measure]
\label{prop:matrix_measure_properties}
Let $A, B \in \mathbb{R}^{n \times n}$ and $\alpha \in \mathbb{R}$. The matrix measure satisfies:
\begin{enumerate}
    \item $\mu_p(\alpha A) = \alpha \mu_p(A)$ for $\alpha \geq 0$,
    \item $\mu_p(A + B) \leq \mu_p(A) + \mu_p(B)$,
    \item $-\|A\|_p \leq \mu_p(A) \leq \|A\|_p$,
    \item $\mu_p(A + \alpha I) = \mu_p(A) + \alpha$,
    \item $|\mu_p(A) - \mu_p(B)| \leq \|A - B\|_p$,
    \item $\mu_p(A) \geq \max\{\operatorname{Re}(\lambda) : \lambda \in \sigma(A)\}$.
\end{enumerate}
\end{proposition}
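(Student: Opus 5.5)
The plan is to prove everything from the definition $\mu_p(A)=\lim_{h\to0^+}(\|I+hA\|_p-1)/h$, using only that $\|\cdot\|_p$ is an induced norm: it is subadditive, absolutely homogeneous, and satisfies $\|I\|_p=1$ and $\|AB\|_p\le\|A\|_p\|B\|_p$. The first thing to settle is that the limit exists. The function $g(h)=\|I+hA\|_p$ is convex in $h$, and $g(0)=1$. Hence the difference quotient $(g(h)-1)/h$ is nondecreasing in $h>0$. It is also bounded below by $-\|A\|_p$, since the triangle inequality gives $\|I+hA\|_p\ge 1-h\|A\|_p$. So the limit exists and equals the infimum of the difference quotient over $h>0$. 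I expect this existence step to be the main obstacle; once it is in place, the remaining items are short manipulations.

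Items 1, 2 and 4 follow directly from the definition.
\begin{itemize}
\item \emph{Item 1.} For $\alpha>0$, substitute $h'=\alpha h$ in the limit; the case $\alpha=0$ is trivial.
\item \emph{Item 2.} Write $I+h(A+B)=\tfrac12(I+2hA)+\tfrac12(I+2hB)$ and apply the triangle inequality. This gives
\[
\frac{\|I+h(A+B)\|_p-1}{h}\le \frac{\|I+2hA\|_p-1}{2h}+\frac{\|I+2hB\|_p-1}{2h},
\]
and letting $h\to0^+$ yields subadditivity.
\item \emph{Item 4.} Note that $\|I+h(A+\alpha I)\|_p=(1+h\alpha)\,\|I+\tfrac{h}{1+h\alpha}A\|_p$ for small $h$. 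Expand, and use the existence of the limit together with the reparametrization $h'=h/(1+h\alpha)\to0^+$.
\end{itemize}

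Items 3 and 5 reduce to norm estimates and item 2.
\begin{itemize}
\item \emph{Item 3.} The triangle inequality gives $1-h\|A\|_p\le\|I+hA\|_p\le 1+h\|A\|_p$. Subtracting $1$, dividing by $h$ and taking the limit gives $-\|A\|_p\le\mu_p(A)\le\|A\|_p$.
\item \emph{Item 5.} By item 2 and the upper bound in item 3, $\mu_p(A)\le\mu_p(B)+\mu_p(A-B)\le\mu_p(B)+\|A-B\|_p$. Exchanging the roles of $A$ and $B$ gives the Lipschitz estimate.
\end{itemize}

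For item 6, take an eigenpair $Av=\lambda v$. Working over $\mathbb C$, the induced norm still dominates the spectral radius; equivalently, use the complexified norm, whose induced matrix norm agrees for $p\in\{1,\infty\}$ by Lemma~\ref{lem:matrix_norm_formulas}. Then $\|I+hA\|_p\ge|1+h\lambda|$. Since $|1+h\lambda|=1+h\operatorname{Re}\lambda+O(h^2)$, the difference quotient is at least $\operatorname{Re}\lambda+O(h)$, and taking $h\to0^+$ gives $\mu_p(A)\ge\operatorname{Re}\lambda$. For $p=2$ I would instead use the real $2\times2$ block on $\operatorname{span}\{\operatorname{Re}v,\operatorname{Im}v\}$, or cite the spectral-radius bound for real induced norms, which holds in general via $\rho(M)\le\|M^k\|^{1/k}$.
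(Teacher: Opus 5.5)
Your proof is correct. There is nothing in the paper to match it against: the paper gives no proof of this proposition and imports all six items from \cite[Ch.~3--4]{bullo_contraction_2026}.

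\textbf{What your route buys.} You work directly from the limit definition and use only that $\|\cdot\|_p$ is an induced norm. Convexity of $h\mapsto\|I+hA\|_p$ gives existence of the limit as an infimum, after which items 1--5 are short manipulations. Only item 6 needs norm-specific input. The argument is therefore self-contained and valid for every induced norm.

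\textbf{A shorter, norm-specific route the paper already supports.} For $p\in\{1,\infty\}$, items 1--5 can be read off the explicit formulas of Lemma~\ref{lem:matrix_measure_formulas}, using the entrywise triangle inequality and the fact that $A+\alpha I$ only shifts the diagonal. Item 6 is then exactly the Gershgorin theorem quoted in the introduction. Every eigenvalue lies in a disc centered at some $a_{ii}$ with radius $R_i=\sum_{j\neq i}|a_{ij}|$, so $\operatorname{Re}\lambda\le a_{ii}+R_i\le\mu_\infty(A)$. Transposing gives the $\mu_1$ case. This route presupposes Lemma~\ref{lem:matrix_measure_formulas}, which is itself derived from the limit definition; yours does not.

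\textbf{Minor tightenings in item 6.}
\begin{itemize}
\item You can use the exact bound $|1+h\lambda|\ge\operatorname{Re}(1+h\lambda)=1+h\operatorname{Re}\lambda$ instead of an $O(h^2)$ expansion.
\item The Lemma~\ref{lem:matrix_norm_formulas} formulas are stated over $\mathbb{R}$, but the same row/column-sum computation holds verbatim over $\mathbb{C}$.
\item For $p=2$ no detour is needed, since the real and complex induced $2$-norms of a real matrix coincide (both equal $\sigma_{\max}$).
\item If you do use the general real-norm route, the justification is Gelfand's formula $\rho(M)=\lim_k\|M^k\|^{1/k}$ (valid for any norm by equivalence of norms) together with submultiplicativity. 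The inequality $\rho(M)\le\|M^k\|^{1/k}$, as you wrote it, is just the claim restated.
\end{itemize}
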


The only difference between the matrix measures and the matrix norms in \(p \in \{1, \infty\}\) is that the matrix measures deal with the signed value of the diagonal elements, while the matrix norms take the absolute value.
However, we make the simple observation that a nonnegative number is equal to its absolute value, and therefore an \(f\) whose Jacobian always has a nonnegative diagonal satisfies
\begin{align}
    \mu_p(Df) = \|Df\|_p, \quad p \in \{1, \infty\}.
\end{align}
In such cases, the quantity \(\|Df\|_p\) has a familiar name:

\begin{definition}[Lipschitz constant]
\label{def:lipschitz_constant}
Let \(f: \mathbb{R}^n \to \mathbb{R}^n\) be a differentiable function. The Lipschitz constant of \(f\) with respect to the \(p\)-norm is defined as
\begin{align}
    \|f\|_{\mathrm{Lip}, p} = \sup_{x \in \mathbb{R}^n} \|D f(x)\|_p.
\end{align}
\end{definition}

Next, we state the main theorem of this paper, which connects matrix measure and Lipschitz constant.

\begin{theorem}[Structure Theorem for WIC]
\label{thm:wic_decomposition}
    Let \(f:\mathbb R^n \to \mathbb R^n\) be a Lipschitz vector field.
    Fix \(p \in \{1, \infty\}\).
    Then \(f\) is WIC if and only if
    \begin{align}
        f(x)
        &= -\gamma x + \phi(x),
    \end{align}
    for some Lipschitz \(\phi\) with \(\left\|\phi\right\|_{\mathrm{Lip}, p} \leq \gamma\).
\end{theorem}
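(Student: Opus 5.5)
The plan is to prove both directions using the explicit row/column-sum formulas of Lemma~\ref{lem:matrix_measure_formulas} and Lemma~\ref{lem:matrix_norm_formulas}, together with the observation above: for a matrix with nonnegative diagonal, $\mu_p = \|\cdot\|_p$ when $p\in\{1,\infty\}$. Throughout, $D f(x)$ exists almost everywhere by Rademacher's theorem. I will treat the suprema in Definitions~\ref{def:wic} and~\ref{def:lipschitz_constant} as essential suprema over points of differentiability. This is harmless for the Lipschitz constant because, for Lipschitz maps, the a.e.\ bound on $\|D\phi\|_p$ equals the global Lipschitz constant; that follows by integrating along segments, after a Fubini-type perturbation of the segment so that it avoids the null set.

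\emph{Sufficiency.} Suppose $f = -\gamma\,\mathrm{id} + \phi$ with $\|\phi\|_{\mathrm{Lip},p}\le\gamma$. Then $D f(x) = -\gamma I + D\phi(x)$ at almost every $x$. By Proposition~\ref{prop:matrix_measure_properties}, items 4 and 3, we get $\mu_p(Df(x)) = \mu_p(D\phi(x)) - \gamma \le \|D\phi(x)\|_p - \gamma \le 0$. Taking the supremum gives WIC. This direction holds for every $p$, not only $p\in\{1,\infty\}$.

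\emph{Necessity.} Suppose $f$ is WIC and Lipschitz with constant $L$, measured in the $p$-norm. Choose any $\gamma \ge L$ and set $\phi(x) = f(x) + \gamma x$; note $\gamma\ge0$, so the decomposition is a genuine proportional negative feedback. Then $\phi$ is Lipschitz and $D\phi(x) = Df(x) + \gamma I$ almost everywhere. The key step is to check that the diagonal of $D\phi(x)$ is nonnegative. Since $|\partial_j f_j(x)| \le \|Df(x)\|_p \le L$ (each entry is bounded by the relevant max row or column sum), we have $\partial_j \phi_j(x) = \partial_j f_j(x) + \gamma \ge \gamma - L \ge 0$. Because the diagonal is nonnegative, the formulas in Lemmas~\ref{lem:matrix_measure_formulas} and~\ref{lem:matrix_norm_formulas} coincide. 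Hence $\|D\phi(x)\|_p = \mu_p(D\phi(x)) = \mu_p(Df(x)) + \gamma \le \gamma$, using item 4 of Proposition~\ref{prop:matrix_measure_properties} and WIC. Taking the essential supremum and applying the Rademacher remark gives $\|\phi\|_{\mathrm{Lip},p}\le\gamma$.

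I expect the main obstacle to be regularity rather than algebra. The paper defines WIC and the Lipschitz constant via pointwise Jacobians, but $f$ is only assumed Lipschitz. So the step that needs care is justifying that the a.e.\ bound $\|D\phi\|_p\le\gamma$ yields a global Lipschitz bound. If that becomes delicate, I would fall back on mollification. The functions $f_\varepsilon = f*\rho_\varepsilon$ are smooth, satisfy the same bounds (convexity of $\mu_p$ and of $\|\cdot\|_p$ via Jensen), and converge uniformly to $f$. The decomposition passes to the limit because Lipschitz bounds are preserved under pointwise limits. It is also worth remarking that the restriction to $p\in\{1,\infty\}$ enters only through the diagonal-sign observation. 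For $p=2$ the identity $\mu_2 = \|\cdot\|_2$ fails even for nonnegative-diagonal matrices, so the necessity direction breaks down there.
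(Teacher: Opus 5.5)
Your proposal is correct and follows the paper's argument. In both, shifting by \(\gamma I\) makes the diagonal of \(D\phi = Df + \gamma I\) nonnegative, so \(\|D\phi(x)\|_p = \mu_p(D\phi(x)) = \gamma + \mu_p(Df(x)) \le \gamma\) for \(p \in \{1,\infty\}\), and the converse comes from items 3--4 of Proposition~\ref{prop:matrix_measure_properties}. The differences are cosmetic: you take any \(\gamma \ge L\), whereas the paper takes the minimal admissible \(\gamma = \max\bigl(0, -\inf_{x,i} \partial f_i/\partial x_i(x)\bigr)\), and you add the Rademacher/a.e.\ regularity justification that the paper leaves implicit by treating \(f\) as differentiable.
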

\begin{proof}
    Proof of \emph{if} direction:\footnote{This is Exercise E3.8 in \cite{bullo_contraction_2026}.}
    By direct computation,
    \begin{align}
        Df(x) &= -\gamma I + D\phi(x),
    \end{align}
    and therefore
    \begin{align}
        \mu_p(Df(x)) &= \mu_p(-\gamma I + D\phi(x)) \\
        &= -\gamma + \mu_p(D\phi(x)) \\
        &\leq -\gamma + \left\|D\phi(x)\right\|_p \\
        &\leq -\gamma + \left\|\phi\right\|_{\mathrm{Lip}, p} \\
        &\leq 0.
    \end{align}
    Therefore, $f$ is WIC.

    Proof of \emph{only if} direction:
    Suppose that \(f\) is WIC.
    Define
    \begin{align}
        \gamma &= \max\left(0, -\inf_{x \in \mathbb{R}^n} \inf_{i \in [n]} \dpd{f_i}{x_i}(x)\right),
        \\
        \phi(x) &= f(x) + \gamma x.
    \end{align}
    Moreover, we know that \(\gamma < \infty\), because 
    \begin{align}
        \gamma \leq \sup_{x\in \mathbb R^n} \sup_{i, j \in [n]} |Df(x)_{ij}|,
    \end{align}
    the entry-wise norm, which is finite because \(f\) is Lipschitz and all norms on finite-dimensional spaces are equivalent.
    
    Therefore, in the case that \(p = 1\),
    \begin{align*}
        \left\|D\phi(x)\right\|_1
        &=
        \max_{i} \del{
            \left|Df(x)_{ii} + \gamma\right|
            + \sum_{j \neq i} |Df(x)_{ij}|
        }
        \\
        &=
        \max_{i} \del{
            Df(x)_{ii} + \gamma
            + \sum_{j \neq i} |Df(x)_{ij}|
        }
        \\
        &=
        \gamma + \max_{i} \del{
            Df(x)_{ii}
            + \sum_{j \neq i} |Df(x)_{ij}|
        }
        \\
        &=
        \gamma + \mu_1(Df(x))
        \\
        &\leq \gamma,
    \end{align*}
    and in the case that \(p = \infty\),
    \begin{align*}
        \left\|D\phi(x)\right\|_\infty
        &=
        \max_{j} \del{
            \left|Df(x)_{jj} + \gamma\right|
            + \sum_{i \neq j} |Df(x)_{ij}|
        }
        \\
        &=
        \max_{j} \del{
            Df(x)_{jj} + \gamma
            + \sum_{i \neq j} |Df(x)_{ij}|
        }
        \\
        &=
        \gamma + \max_{j} \del{
            Df(x)_{jj}
            + \sum_{i \neq j} |Df(x)_{ij}|
        }
        \\
        &=
        \gamma + \mu_\infty(Df(x))
        \\
        &\leq \gamma,
    \end{align*}
\end{proof}

This theorem establishes that for \(p \in \{1, \infty\}\), the map
\begin{multline*}
    (\mathbb R_{\geq 0}, \mathrm{Lip}(\mathbb R^n))
    \\
    \to \mathrm{Lip}(\mathbb R^n)
    \cap \del{\text{\(p\)-WIC functions}: \mathbb R^n \to \mathbb R^n}
\end{multline*}
defined by
\begin{gather*}
    (\epsilon, f) \mapsto (-\epsilon + \|f\|_{\mathrm{Lip}, p}) \operatorname{id} + f
\end{gather*}
is a surjection.
In other words, the problem of parameterizing Lipschitz WIC functions reduces to parameterizing Lipschitz functions.

\begin{remark}
\label{rem:lipschitz_assumption}
    It is not an onerous restriction to require WIC functions to be furthermore Lipschitz, as Lipschitz continuity is already a natural assumption in theoretical analyses which require existence and uniqueness of classical solutions to ODEs, as well as accurate numerical schemes for integrating them.
\end{remark}

\subsection{Extension to weighted norms}
An LTI system \(\dot x = Ax\) with Hurwitz \(A\) may not necessarily contract in the 2-norm \(\sqrt{x ^\intercal x}\), but
\(A\) necessarily satisfies a Lyapunov equation:
\begin{align*}
    A^\intercal P + PA < 0
\end{align*}
for some positive definite \(P > 0\).
This motivates the consideration of weighted norms \(\|Wx\|_p\) for invertible \(W\).

\begin{definition}[Weighted WIC]
\label{def:weighted_wic}
    A vector field \(f: \mathbb{R}^n \to \mathbb{R}^n\) is said to be weighted WIC (wWIC) if there exists an invertible weight matrix \(W \in \mathbb{R}^{n \times n}\) such that \(f\) is WIC with respect to the weighted norm \(\|Wx\|_p\).
\end{definition}

Applying Theorem~\ref{thm:wic_decomposition} to \(W^{-1} f\), we obtain: 
\begin{corollary}[Structure theorem for weighted norms]
\label{cor:weighted_wic_decomposition}
    Let \(f: \mathbb{R}^n \to \mathbb{R}^n\) be Lipschitz and fix \(p \in \{1, \infty\}\).
    Then \(f\) is wWIC with weight matrix \(W\) if and only if
    \begin{align}
        f(x) &= -\gamma\, x + W^{-1}\phi(Wx)
    \end{align}
    for some \(\gamma \geq 0\) and some Lipschitz \(\phi\) with \(\left\|\phi\right\|_{\mathrm{Lip},\,p} \leq \gamma\).
\end{corollary}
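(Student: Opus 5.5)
The plan is to reduce to Theorem~\ref{thm:wic_decomposition} by a linear change of coordinates. Set \(y = Wx\) and define the conjugated vector field
\begin{align*}
    g(y) &= W f(W^{-1} y).
\end{align*}
Since \(f\) is Lipschitz and \(W\) is invertible, \(g\) is Lipschitz as well. The first step is to relate the Jacobians:
\begin{align*}
    Dg(y) &= W\, Df(W^{-1}y)\, W^{-1}.
\end{align*}

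The second step is to record the standard fact that the matrix measure induced by the weighted norm \(\|Wx\|_p\) satisfies
\begin{align*}
    \mu_{p,W}(A) &= \mu_p(WAW^{-1}).
\end{align*}
This follows directly from Definition~\ref{def:matrix_measure}. The induced norm of \(I+hA\) in the weighted norm equals \(\|W(I+hA)W^{-1}\|_p = \|I + hWAW^{-1}\|_p\). Taking the sup over \(x\) (equivalently over \(y=Wx\)) then gives
\begin{align*}
    \sup_x \mu_{p,W}(Df(x)) &= \sup_y \mu_p(Dg(y)).
\end{align*}
Consequently, \(f\) is WIC in \(\|W\cdot\|_p\) if and only if \(g\) is WIC in \(\|\cdot\|_p\).

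The third step applies Theorem~\ref{thm:wic_decomposition} to \(g\), and both directions transfer through the conjugation.
\begin{itemize}
    \item \emph{Only if.} Theorem~\ref{thm:wic_decomposition} gives \(g(y) = -\gamma y + \phi(y)\) with \(\|\phi\|_{\mathrm{Lip},p}\le\gamma\). Substituting \(y=Wx\) and multiplying by \(W^{-1}\) yields
    \begin{align*}
        f(x) &= -\gamma x + W^{-1}\phi(Wx).
    \end{align*}
    The theorem's construction already ensures \(\gamma\ge 0\).
    \item \emph{If.} Given the displayed form of \(f\), conjugation produces \(g(y) = -\gamma y + \phi(y)\). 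By the \emph{if} direction of Theorem~\ref{thm:wic_decomposition}, \(g\) is WIC, and hence \(f\) is wWIC with weight \(W\).
\end{itemize}

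The only step requiring care is the weighted matrix-measure identity, which is a short direct computation rather than a real obstacle. One should also note that the Lipschitz constant of \(\phi\) is measured in the unweighted \(p\)-norm. This is consistent because the weight has been absorbed into the conjugation \(W^{-1}\phi(W\cdot)\).
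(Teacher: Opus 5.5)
Your proposal is correct and takes essentially the same approach as the paper, which derives the corollary in a single line by applying Theorem~\ref{thm:wic_decomposition} after a linear change of coordinates. Your explicit conjugation \(g(y) = Wf(W^{-1}y)\) and the identity \(\mu_{p,W}(A) = \mu_p(WAW^{-1})\) fill in the details the paper leaves implicit, and they make precise what the paper loosely calls applying the theorem to \(W^{-1}f\).
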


\section{Lipschitz by construction: neural networks}
Whereas it is not straightforward to parameterize WIC functions directly,
we can instead parameterize Lipschitz functions.
The Lipschitz-to-WIC mapping is lossless (Theorem~\ref{thm:wic_decomposition}).
In order to get a lossless parameterization of WIC functions, we need to parameterize Lipschitz functions.

Every neural network (with a Lipschitz activation function) is Lipschitz,
and Lipschitz regularity is more than sufficient for universal approximation by neural networks.
The challenge consists, however, in determining what the Lipschitz constant is.
It is NP-hard to approximate the Lipschitz constant of a general neural network \(\phi\) to arbitrary accuracy;
Theorem~\ref{thm:wic_decomposition}
demands a minimally conservative upper bound.



To obtain a minimally conservative Lipschitz upper bound,
we employ activation functions whose Jacobians always land in extreme points of the operator norm ball.
Following \cite{lipschitz_preprint}, we use \emph{saturated polyactivations}: vector-valued activation functions $\vec\sigma : \mathbb R^d \to \mathbb R^{dK}$ whose layerwise Jacobian satisfies
\begin{align*}
\left\|\dpd{}{x} \vec\sigma(x)\right\|_p = 1 \quad \text{for almost every } x \in \mathbb R^d.
\end{align*}
This ensures that the activation nonlinearity contributes unit Lipschitz constant, making the network's Lipschitz bound tight with respect to the weight matrices.
For example, the absolute value $x \mapsto |x|$ saturates all $p$-norms, while the CReLU $x \mapsto (\max(0,x), \max(0,-x))$ saturates $p \in [1,\infty)$ \cite{shang_understanding_2016,lipschitz_preprint}.

\section{Numerical Examples}
\label{sec:numerical-examples}
\subsection{Toy example ($1$-norm contraction)}
\label{sec:toy-example}
We generate $N = 20$ random pairs $\{(x_i(0),\, x_i(T))\}_{i=1}^N$ with $T = 1$;
these are interpreted as origin-destination pairs of an unknown flow.
We hope to find the ``best-fit'' 1-norm w-WIC vector field that explains all of these pairs.

To do so, we train a contractive neural ODE of the form
\begin{align*}
    \dot x = -\|\phi(\cdot;\theta)\|_{\mathrm{Lip},1} x + W\phi(W^{-1} x;\theta),
\end{align*}
by minimizing (in the norm weight \(W\) and the network parameters \(\theta\)) the profile log-likelihood loss
\begin{align*}
    \mathcal L(\theta, W) = \det\del{
        \frac{1}{N}\sum_{i=1}^N r_i r_i^\top
    },
    \quad r_i = \hat x_i(T;\theta) - x_i(T),
\end{align*}
where $\hat x_i(T;\theta)$ is the neural ODE prediction from $x_i(0)$.
The map $\phi$ is a one-hidden-layer MLP with 40 hidden units, vector-valued activation $x \mapsto \del{\frac{x + \sin x}{2}, \frac{x - \sin x}{2}}$.
Training uses 400 steps of Cocob \cite{orabona_training_2017}.

\begin{figure*}
    \centering
    \includegraphics[width=\linewidth]{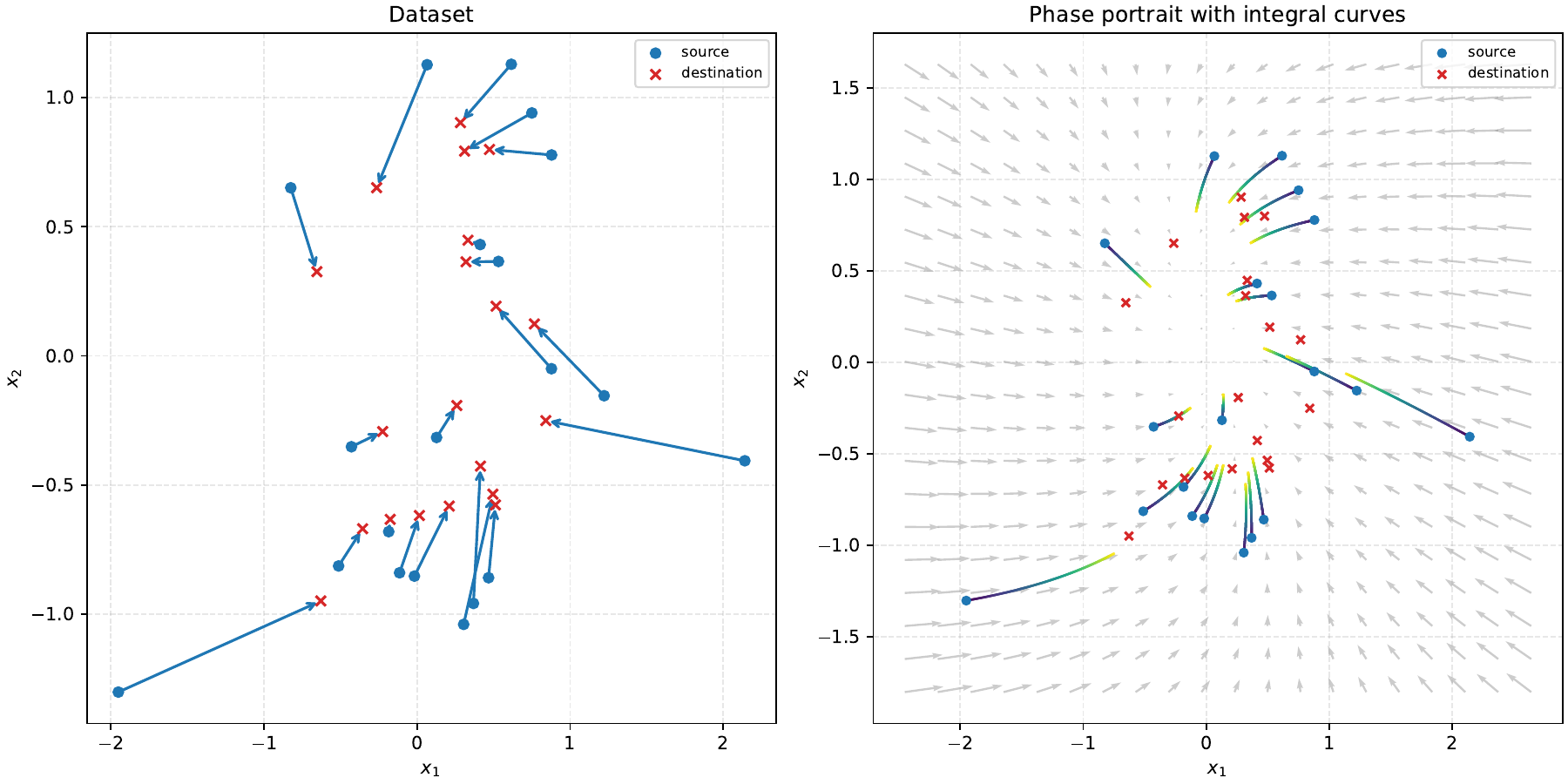}
    \caption{Phase portrait comparison for the toy example. Left: origin-destination pairs. Right: learned contractive neural ODE.}
    \label{fig:toy_example}
\end{figure*}


\subsection{Opinion network ($\infty$-norm contraction)}
\label{sec:opinion-network}
The ground truth system is a nonlinear opinion dynamics model on a strongly connected weighted digraph with $n = 4$ nodes:
\begin{align}
    \dot x = -D\,x + \nu\,\tanh(A\,x),
    \label{eq:opinion_dynamics}
\end{align}
where $A \in \mathbb R^{4 \times 4}_{\geq 0}$ is the adjacency matrix, $D = \operatorname{diag}(A\mathbf 1_4)$ is the out-degree matrix, and $\nu \in [0,1]$ is an activation parameter.
Exercise 4.5 \cite{bullo_contraction_2026} shows that this system, derived from \cite{gray_multiagent_2018}, is WIC in the $\infty$-norm for all $\nu \leq 1$.
We set $\nu = 1$ (weak contraction, $\mu_\infty = 0$).

We generate $N = 100$ training and $50$ test trajectory pairs $(x_i(0),\, x_i(T))$ with $T = 2$ from initial conditions $x_i(0) \sim \mathcal N(0, 4I_4)$.
A contractive neural ODE with $p = \infty$, a one-hidden-layer MLP with 40 hidden units, and $x \mapsto (|x|)$ activation is trained for 2000 steps.

\begin{figure*}
    \centering
    \includegraphics[width=\linewidth]{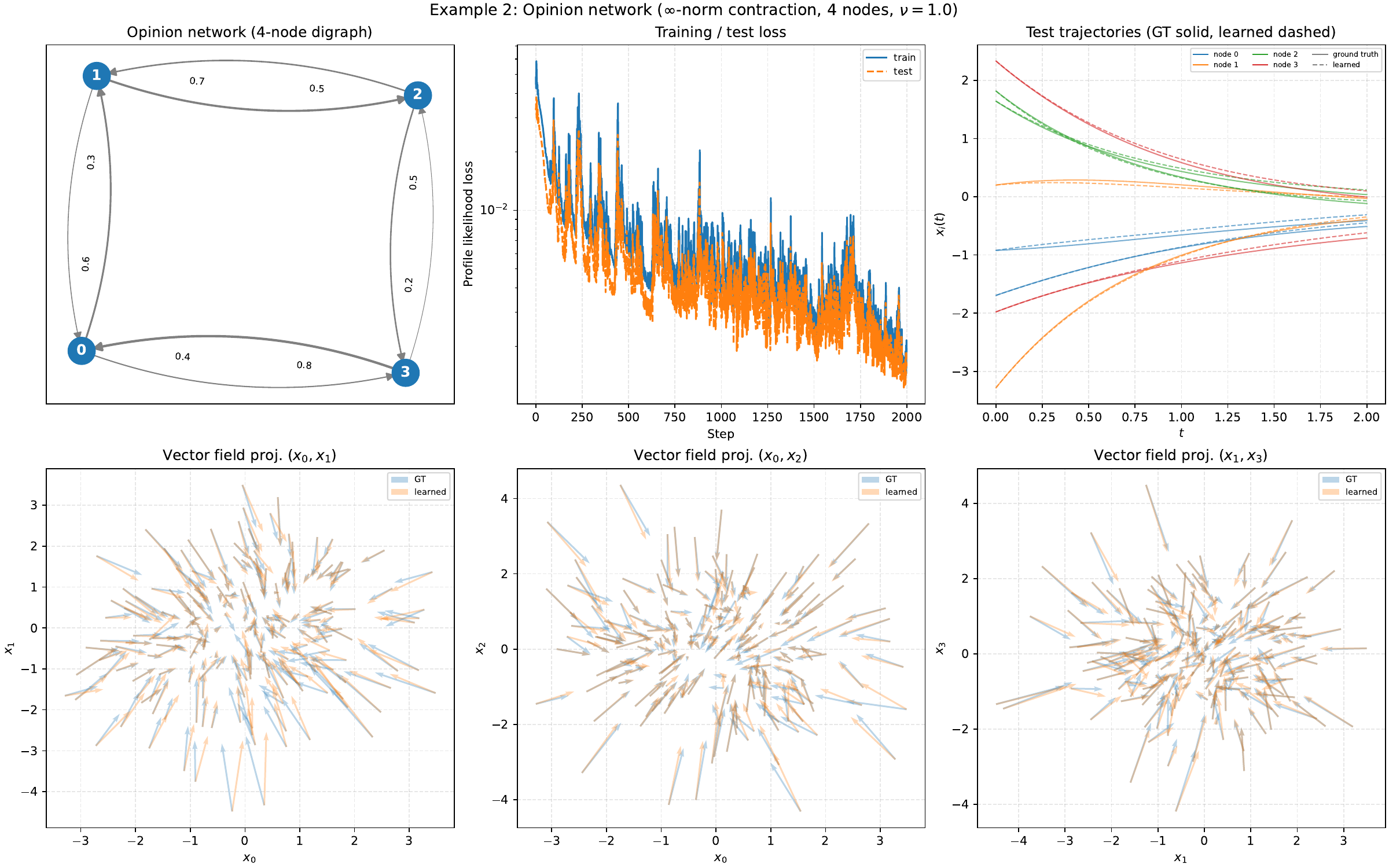}
    \caption{Opinion network example.
    Top left: the 4-node weighted digraph defining $A$.
    Top middle: training and test loss.
    Top right: ground truth (solid) and learned predictions (dashed) for one initial condition in the test dataset.
    Bottom row: pairwise projections of the ground-truth (blue) and learned (orange) vector fields onto the $(x_0,x_1)$, $(x_0,x_2)$, and $(x_1,x_3)$ planes.}
    \label{fig:opinion_network}
\end{figure*}

Figure~\ref{fig:opinion_network} shows the training and test loss curves, ground-truth and learned time series for six initial conditions, and pairwise projections of the vector fields.
The learned model captures the consensus-seeking dynamics of \eqref{eq:opinion_dynamics}: trajectories converge toward the origin, reproducing the qualitative behavior of the ground-truth system while maintaining $\infty$-norm contractivity by construction.

\section{Discussion: where are all the \texorpdfstring{\(p \in \{1, \infty\}\)}{p in {1, infinity}}-contractions?}
We begin by praising the advantages of incremental contraction in \(p\in \{1,\infty\}\) before reflecting on their limitations.

\begin{enumerate}
    \item The structure theorem \ref{thm:wic_decomposition} and its corollary for weighted norms comprise an if-and-only-if description of \(1\)- and \(\infty\)-norm WIC systems in terms of a simpler concept (Lipschitz functions).
    \item Verifying the Lipschitz constant in \(p \in \{1, \infty\}\) only requires \(O(d^2)\) operations, where \(d\) is the state or hidden layer dimension, whichever is higher.
    This is thus of the same order as feedforward evaluation.
    Eigenvalue and semidefinite programming techniques for the \(p=2\) norm are \(O(d^3)\) and \(O(d^{3.5})\), respectively.
\end{enumerate}

Why, then is the literature on systems \(\dot x = f(x)\) that contract in the the \(p\)-norm for \(p \in \{1, \infty\}\) rich in theory and poor in concrete examples?
The ratio of \(\textbf{definitions}:\textbf{theorems}:\textbf{examples}\) in a theory paper on continuous-time contraction in the 2-norm is \(2:2:9\).\footnote{Here, ``theorems'' includes lemmas and ``examples'' includes counterexamples.}
A recent work on non-Euclidean contraction theory has a ratio of \(14:34:1\).
We posit that this literature (the current work included) is top-heavy because \(p \in \{1, \infty\}\)-contractions are intrinsically less diverse and less physical---unphysical should not, however be taken to mean uninteresting or unimportant, and it does not diminish the mathematical and substantive contributions that have arised in the analysis of physically motivated Lur'e systems (such as continuous-time neuron models), synthetic applications such as opinion dynamics, and others.

Our structure theorem explains why most contractive systems (\(p \in \{1, \infty\}\)) in the analysis literature are Lur'e systems.
It is because in fact, all of them are.

\subsection{Diversity}
Recall that infinitesimal contraction means that along a trajectory
\(\dot x = f(x)\),
the variational linear time-varying equation
\begin{align}
    \delta \dot x &= \underbrace{\frac{\partial f}{\partial x}(x)}_{:= A} \delta x
    \label{eq:variational}
\end{align}
satisfies the contraction condition.
Over short times \(\Delta t\),
\begin{align*}
    \delta x(t + \Delta t) &\approx \exp(A \Delta t) \delta x(t)
\end{align*}
and the \(p\)-norm contraction condition is satisfied if \(\left\|\exp(A \Delta t)\right\|_p < 1\).
How many such matrices are there for each \(p\)?

As a proxy for the diversity of contracting flows, we may assess the size of the isometry group of the \(p\)-norm, namely those matrices \(O \in \mathbb{R}^{n \times n}\) satisfying \(\|Ox\|_p = \|x\|_p\) for all \(x \in \mathbb{R}^n\).
In \(p=2\), the isometry group is the orthogonal group \(O(n)\), which is a smooth manifold of dimension \(\frac{n(n-1)}{2}\).
In \(p \in [1, \infty] - \{2\}\), the isometry group is discrete and consists only of signed permutation matrices.
This is essentially because the unit sphere is a (rounded) polyhedron, and any isometry must map vertices to vertices.\footnote{This fact was communicated to the first author by Ken Brown and Gavin Pandya.}

\subsection{Physicality}
The laws of mechanics, as well as those of electricity and magnetism, are symmetric with respect to solid-body rotation.
This means that the potential and kinetic energy scalars from which laws of motion are derived in the Lagrangian or Hamiltonian formalisms can feature no \(p\)-norm other than \(p=2\).
In contrast, \(p \in \{1, \infty\}\) norms do not arise naturally in mechanics.
Many mechanical systems which are stable (in the energy sense) are not contractive in the \(p \in \{1, \infty\}\) sense.

To make this sense of strictness precise, let us identify the generic contracting linear systems \(\dot x = Ax\) in dimension 2.
According to modal decomposition, for \(\dot x= Ax\) to be WIC in a weighted \(2\)-norm, it is necessary and sufficient for \(A\) to marginally stable.
In the \(p=1\) and \(p=\infty\) cases, the eigenvalues must lie in a strict subset of the left half-plane.

\begin{theorem}
\label{thm:eigenvalue_cone}
    Let \(A \in \mathbb R^{2 \times 2}\) be diagonalizable and \(p \in \{1, \infty\}\). Then \(A\) is WIC in a weighted \(p\)-norm if and only if its eigenvalues lie in the cone \(\{\alpha + \beta i : \alpha \leq 0, |\beta| \leq -\alpha\}\).
\end{theorem}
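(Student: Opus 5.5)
The plan is to reduce the statement to a question about real similarity classes and then to an elementary inequality on \(2\times 2\) matrices. The first step is to record that the matrix measure induced by the weighted norm \(\|Wx\|_p\) satisfies \(\mu_{W,p}(A) = \mu_p(WAW^{-1})\). This follows directly from Definition~\ref{def:matrix_measure}, because the induced weighted norm is \(\|WMW^{-1}\|_p\). So \(A\) is WIC in some weighted \(p\)-norm if and only if some real matrix \(B\) similar to \(A\) has \(\mu_p(B)\le 0\).

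Next, I reduce \(p=1\) to \(p=\infty\). Lemma~\ref{lem:matrix_measure_formulas} gives \(\mu_1(B)=\mu_\infty(B^\intercal)\). Moreover, \(B^\intercal = W^{-\intercal}A^\intercal W^{\intercal}\), and \(A^\intercal\) has the same eigenvalues and diagonalizability as \(A\). It therefore suffices to treat \(p=\infty\), where \(\mu_\infty(B)\le 0\) means \(a+|b|\le 0\) and \(d+|c|\le 0\) for \(B=\begin{pmatrix} a & b\\ c & d\end{pmatrix}\).

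\emph{Real eigenvalues.} Here \(\beta=0\), so the cone condition reads \(\lambda_1,\lambda_2\le 0\).
\begin{itemize}
\item Necessity: Proposition~\ref{prop:matrix_measure_properties}(6) gives \(\max_i \lambda_i \le \mu_\infty(B)\le 0\).
\item Sufficiency: diagonalizability lets me take \(W\) real with \(WAW^{-1}=\operatorname{diag}(\lambda_1,\lambda_2)\), whose \(\infty\)-measure is \(\max_i\lambda_i\le 0\). This covers the repeated case, where \(A=\lambda I\).
\end{itemize}

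\emph{Non-real eigenvalues \(\alpha\pm i\beta\), \(\beta\neq 0\).} These eigenvalues are distinct, so the real similarity class of \(A\) is exactly the set of real \(B\) with \(\operatorname{tr}B = 2\alpha\) and \(\det B=\alpha^2+\beta^2\).
\begin{itemize}
\item Sufficiency: if \(|\beta|\le -\alpha\), take \(B=\begin{pmatrix}\alpha & \beta\\ -\beta & \alpha\end{pmatrix}\), which lies in the class and has \(\mu_\infty(B)=\alpha+|\beta|\le 0\).
\item Necessity: suppose \(a+|b|\le 0\) and \(d+|c|\le 0\). Then \(a,d\le 0\), \(|b|\le -a\) and \(|c|\le -d\), so \(-bc\le |b||c|\le ad\). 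Hence
\begin{align*}
\alpha^2+\beta^2 = ad-bc \le 2ad \le 2\left(\tfrac{a+d}{2}\right)^2 = 2\alpha^2,
\end{align*}
using AM--GM, which is valid since \(ad\ge 0\). This gives \(|\beta|\le|\alpha|\), and \(\alpha=\tfrac{a+d}{2}\le 0\) gives \(|\beta|\le -\alpha\).
\end{itemize}
The chain of inequalities just needs the sign bookkeeping done carefully.

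The step I expect to need the most care is the identification of the real similarity class with the set of real matrices of the given trace and determinant. This holds because a real \(2\times2\) matrix with distinct eigenvalues is determined up to real similarity by its characteristic polynomial, via the real canonical form. It is also exactly where diagonalizability is used, since it rules out a nontrivial Jordan block in the real-eigenvalue case.
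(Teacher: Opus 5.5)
Your proof is correct and follows the paper's argument essentially step for step. You reduce to \(\mu_p(WAW^{-1})\le 0\), use the diagonal and rotation--scaling canonical forms for sufficiency, and for necessity combine the row-sum inequalities with the trace/determinant identities and AM--GM to get \(\alpha^2+\beta^2\le 2\alpha^2\); the only cosmetic difference is that you treat \(p=\infty\) and obtain \(p=1\) by transposition, whereas the paper does \(p=1\) first.
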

\begin{proof}
    \emph{If direction.}
    Suppose the eigenvalues of \(A\) lie in the cone.
    If \(A\) has real eigenvalues \(\lambda_1, \lambda_2 \leq 0\), then \(A\) is similar to \(\operatorname{diag}(\lambda_1, \lambda_2)\), which has \(\mu_p = \max(\lambda_1, \lambda_2) \leq 0\).
    If \(A\) has complex eigenvalues \(\alpha \pm \beta i\) with \(|\beta| \leq -\alpha\), then \(A\) is similar to \(B = \begin{psmallmatrix} \alpha & \beta \\ -\beta & \alpha \end{psmallmatrix}\), and Lemma~\ref{lem:matrix_measure_formulas} gives \(\mu_1(B) = \alpha + |\beta| \leq 0\).

    \emph{Only if direction.}
    WIC of \(\dot x = Ax\) in \(\|Wx\|_p\) is equivalent to \(\mu_p(C) \leq 0\) where \(C = WAW^{-1}\).
    By Proposition~\ref{prop:matrix_measure_properties}(6), every eigenvalue has \(\operatorname{Re}(\lambda) \leq 0\).
    It remains to show that complex eigenvalues \(\alpha \pm \beta i\) satisfy \(|\beta| \leq -\alpha\).
    Let \(C \in \mathbb R^{2 \times 2}\) satisfy \(\mu_p(C) \leq 0\) with eigenvalues \(\alpha \pm \beta i\), \(\beta \neq 0\).
    The trace and determinant of \(C\) are determined by the eigenvalues:
    \begin{align*}
        c_{11} + c_{22} &= 2\alpha, \\
        c_{11}c_{22} - c_{12}c_{21} &= \alpha^2 + \beta^2.
    \end{align*}
    For \(p = 1\), the matrix measure is
    \begin{align*}
        \mu_1(C) = \max(c_{11} + |c_{21}|,\; c_{22} + |c_{12}|).
    \end{align*}
    Thus \(\mu_1(C) \leq 0\) requires \(c_{11} \leq -|c_{21}|\) and \(c_{22} \leq -|c_{12}|\), hence both diagonal entries are nonpositive and
    \begin{align*}
        |c_{12}|\,|c_{21}| \leq |c_{11}|\,|c_{22}| = c_{11} c_{22}.
    \end{align*}
    Then
    \begin{align*}
        \alpha^2 + \beta^2 = c_{11}c_{22} - c_{12}c_{21} \leq c_{11}c_{22} + |c_{12}c_{21}| \leq 2c_{11}c_{22}.
    \end{align*}
    By AM--GM on the nonpositive diagonal entries,
    \begin{align*}
        c_{11}c_{22} \leq \del{
            \frac{c_{11}+c_{22}}{2}
        }^2 = \alpha^2,
    \end{align*}
    so \(\alpha^2 + \beta^2 \leq 2\alpha^2\), i.e.\ \(|\beta| \leq |\alpha| = -\alpha\).
    The \(p = \infty\) case can be obtained by transposing \(C\), which swaps the roles of rows and columns while preserving the trace and determinant.
\end{proof}

\begin{remark}
\label{rem:trace_det_region}
    Theorem~\ref{thm:eigenvalue_cone} can be restated in terms of the similarity invariants \(\tau = \operatorname{tr} A\) and \(\delta = \det A\).
    For \(p = 2\), WIC in a weighted \(2\)-norm requires only that \(A\) be marginally stable, i.e.\ \(\tau < 0\) and \(\delta > 0\).
    For \(p \in \{1, \infty\}\), the eigenvalue cone imposes the additional constraint \(\delta \leq \tau^2/2\).

    To see this, note that if \(A\) has real eigenvalues \(\lambda_1, \lambda_2 \leq 0\), then \(\tau = \lambda_1 + \lambda_2 \leq 0\), \(\delta = \lambda_1 \lambda_2 \geq 0\), and the discriminant condition gives \(\delta \leq \tau^2/4 \leq \tau^2/2\).
    If \(A\) has complex eigenvalues \(\alpha \pm \beta i\), then \(\tau = 2\alpha\) and \(\delta = \alpha^2 + \beta^2\).
    The cone condition \(|\beta| \leq -\alpha\) becomes \(\delta = \alpha^2 + \beta^2 \leq 2\alpha^2 = \tau^2/2\).

    Thus, for \(p \in \{1, \infty\}\), the WIC region in the \((\tau, \delta)\)-plane is
    \begin{align*}
        \{\,(\tau, \delta) : \tau \leq 0,\; 0 < \delta \leq \tau^2/2\,\},
    \end{align*}
    a strict subset of the stable region \(\{\tau \leq 0,\; \delta \geq 0\}\); see Figure~\ref{fig:trace_det}.
    The boundary parabola \(\delta = \tau^2/2\) corresponds to eigenvalues on the rays \(\alpha \pm \alpha i\) (\(\alpha < 0\)), i.e.\ the boundary of the cone at \(45^\circ\) from the negative real axis.
\end{remark}

\begin{figure}
    \centering
    \includegraphics[width=\columnwidth]{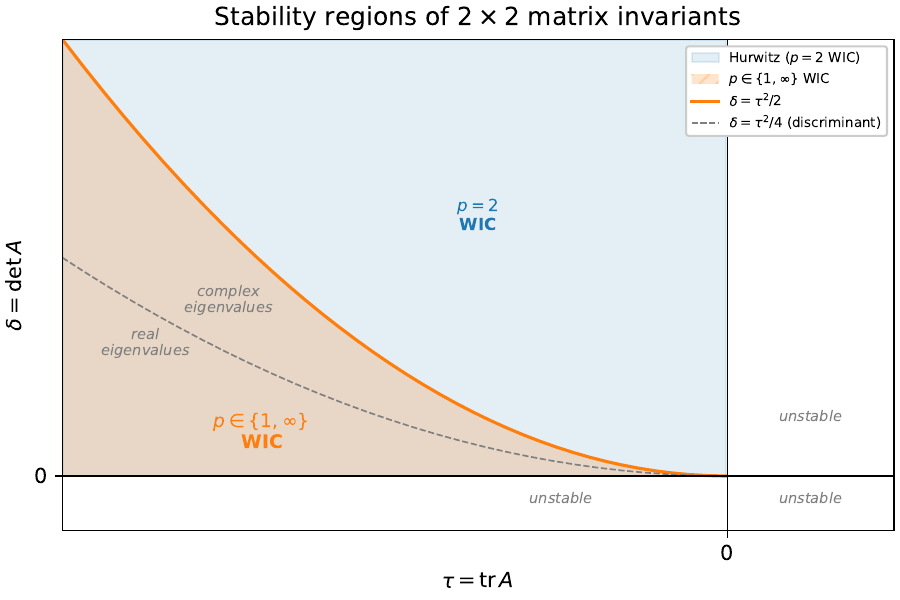}
    \caption{The $(\tau,\delta)$-plane for diagonalizable $2\times 2$ matrices with $\tau = \operatorname{tr} A$ and $\delta = \det A$.
    The shaded marginally stable region ($\tau < 0$, $\delta > 0$) characterizes $p = 2$ WIC.
    The hatched subregion below the parabola $\delta = \tau^2/2$ characterizes $p \in \{1,\infty\}$ WIC.
    The dashed curve $\delta = \tau^2/4$ is the zero set of the discriminant $\tau^2 - 4\delta$: eigenvalues are real below it and complex above it.}
    \label{fig:trace_det}
\end{figure}



\section{Conclusion}

We presented an unconstrained parameterization of neural ODEs that are weakly infinitesimally contracting in the \(1\)- and \(\infty\)-norms.
The key ingredient is a structure theorem (Theorem~\ref{thm:wic_decomposition}) that reduces the design of contracting vector fields to the design of Lipschitz-bounded maps, a problem for which scalable neural-network architectures already exist.
The resulting training procedure requires no projections, barrier functions, or penalty terms, and the per-layer Lipschitz certification costs only \(O(d^2)\) operations.
Preliminary numerical experiments on a toy flow-fitting task and a four-node opinion network demonstrate that the parameterization can recover contracting dynamics from trajectory data.

An eigenvalue-cone characterization (Theorem~\ref{thm:eigenvalue_cone}) clarifies the price of non-Euclidean contraction: in dimension two, the set of \(p \in \{1,\infty\}\)-contracting linear systems is a strict subset of the marginally stable systems, reflecting a fundamental trade-off between computational tractability and expressiveness.
Extending this characterization to higher dimensions and to other values of \(p\) remains an open problem.

\bibliographystyle{plain}
\bibliography{neural-ode-contraction,lipschitz-preprint}

\end{document}